\renewcommand\thesection{\arabic{section}}
\renewcommand\thesubsection{\arabic{subsection}}
\def\p@subsection{\thesection.}
\def\p@subsubsection{\thesection.\thesubsection.}
\theoremstyle{plain}
\newtheorem{lemma}{Lemma}[section]
\theoremstyle{definition}
\theoremstyle{remark}
\newcommand{\be}{\begin{equation}\begin{aligned}}
\newcommand{\ee}{\end{aligned}\end{equation}}
\newcommand{\bb}{\mathbb}
\definecolor{indigo(dye)}{rgb}{0.0, 0.25, 0.42}
\newcommand{\antiHilb}{%
\hspace{4pt} 
  \vbox{%
    \hrule height 0.5pt
    \kern0.25ex
    \hbox{%
      \kern-0.3em
      \ifmmode\Hilb\else\ensuremath{\Hilb}\fi
      \kern0em
    }
  }
}
\begin{document}
\count\footins = 800 
\setstretch{1.2}


\title{Cross-Section Continuity of Definitions of Angular Momentum}


\author{Po-Ning Chen}
\email{po-ning.chen@ucr.edu}
\affiliation{Department of Mathematics\\ University of California\\ Riverside, CA, USA.}

\author{Daniel E. Paraizo}
\email{dep5397@psu.edu}
\affiliation{Enrico Fermi Institute and Department of Physics\\ The University of Chicago\\ 5640 South Ellis Avenue, Chicago, IL 60637, USA.}
\affiliation{Institute for Gravitation and the Cosmos and Department of Physics, Pennsylvania State University, University Park, PA, 16802}

\author{Robert M. Wald}
\email{rmwa@uchicago.edu}
\affiliation{Enrico Fermi Institute and Department of Physics\\ The University of Chicago\\ 5640 South Ellis Avenue, Chicago, IL 60637, USA.}

\author{Mu-Tao Wang}
\email{mw2007@columbia.edu}
\affiliation{Department of Mathematics\\ Columbia University\\ New York, NY, USA.}

\author{Ye-Kai Wang}
\email{ykwang@math.nctu.edu.tw}
\affiliation{Department of Applied Mathematics\\ National Yang Ming Chiao Tung University\\ Hsinchu, Taiwan.}
\affiliation{National Center of Theoretical Science, Taipei City, Taiwan}

\author{Shing-Tung Yau}
\email{styau@tsinghua.edu.cn}
\affiliation{Department of Mathematics\\ Tsinghua University\\ Beijing, China.}

\begin{abstract}

We introduce a notion of ``cross-section continuity'' as a criterion for the viability of definitions of angular momentum, $J$, at null infinity: If a sequence of cross-sections, ${\mathcal C}_n$, of null infinity converges uniformly to a cross-section ${\mathcal C}$, then the angular momentum, $J_n$, on ${\mathcal C}_n$ should converge to the angular momentum, $J$, on ${\mathcal C}$. The Dray-Streubel (DS) definition of angular momentum automatically satisfies this criterion by virtue of the existence of a well defined flux associated with this definition. However, we show that the one-parameter modification of the DS definition proposed by Compere and Nichols (CN)---which encompasses numerous other alternative definitions---does not satisfy cross-section continuity. On the other hand, we prove that the Chen-Wang-Yau (CWY) definition does satisfy the cross-section continuity criterion.

\end{abstract}

\maketitle
\tableofcontents

\section{Introduction}
\label{sec:intro}

General relativity is a complete classical theory of gravity. It does not require definitions of auxiliary quantities such as mass and angular momentum in order to make physical predictions. Nevertheless, such auxiliary quantities can be extremely useful for proving general results on the behavior of systems as well as for obtaining physical insight into the nature of various phenomena. In particular, the notion of Bondi mass---together with its positivity and the positivity of its flux---provides an extremely powerful tool for characterizing and constraining possible behaviors in asymptotically flat spacetimes.

One would expect a notion of angular momentum also to be very useful for characterizing the behavior of systems in asymptotically flat spacetimes. However, a notable difficulty in obtaining a useful notion of angular momentum arises from the fact that in general relativity the asymptotic symmetry group is the Bondi-Metzner-Sachs (BMS) group. The BMS group significantly enlarges the Poincare group by including ``angle dependent translations'' known as supertranslations, which are characterized by an arbitrary function on a sphere. As will be elucidated further in the next paragraph, in a general asymptotically flat spacetime, there is no unique way of selecting a preferred Poincare subgroup of the BMS group. Consequently, there is no obvious way of distinguishing between a ``pure rotation'' and a ``rotation plus supertranslation.'' Correspondingly, there is no obvious way of distinguishing between ``angular momentum'' and ``angular momentum plus a supertranslation charge.'' 

The nature of the difficulty in selecting a preferred Poincare subgroup of asymptotic symmetries in a generic asymptotically flat spacetime is elucidated by the following considerations. During a sufficiently long era where the Bondi news vanishes, there exists a unique $4$-parameter family of cross-sections of null infinity--- which we refer to as ``electric parity good cuts''---on which the electric parity part of the shear tensor\footnote{The usual notion of ``good cuts'' requires that both the electric and magnetic parity parts of the shear vanish. We require only the vanishing of the electric parity part of the shear.} vanishes \cite{NP}. In such eras, there is a unique Poincare subgroup of the BMS group that maps these electric parity good cuts into themselves \cite{NP}. Thus, in eras where the Bondi news vanishes, one can naturally eliminate the ``supertranslation ambiguity'' of the BMS group by restricting to this Poincare subgroup. In particular, if the news vanishes sufficiently rapidly at early retarded times ($u \to - \infty$), one can naturally pick out a preferred Poincare subgroup at asymptotically early times. Similarly, if the news vanishes sufficiently rapidly at late retarded times ($u \to + \infty$), one can naturally pick out a preferred Poincare subgroup at asymptotically late times. The problem is that, generically, these early and late time subgroups will be {\em different} Poincare subgroups of the BMS group. Indeed, the {\em memory effect} is characterized by having the ``electric parity good cuts'' at asymptotically early and late times differ by a nontrivial supertranslation. The fact that a nonvanishing memory effect generically is present in asymptotically flat spacetimes provides a clear demonstration that, generically, it cannot be useful to attempt to restrict consideration to a single Poincare subgroup of asymptotic symmetries.

Given the above situation, there are two possible strategies that can be employed to attempt to define a notion of angular momentum. The first is to abandon the attempt to identify a unique notion of a ``rotation'' and a corresponding ``angular momentum'' but rather work with the entire group of BMS symmetries. One then defines notions of ``charge'' conjugate to all BMS symmetries. Of course, any individual might choose to declare a particular BMS symmetry to be a ``rotation'' and they might then refer to the corresponding BMS charge as ``angular momentum.'' However, different individuals may make different choices. In other words, this approach has the drawback that there would be an inherent supertranslation ambiguity in what BMS charge should be called ``angular momentum.'' Furthermore, as described above, at asymptotically early and late retarded times, one does have a well defined notion of a ``pure rotation'' (as opposed to a ``rotation plus supertranslation''). However, in the presence of memory, no fixed BMS charge can simultaneously correspond to this natural notion of angular momentum in both of these asymptotic regions.

A second strategy is to use data on a given cross-section to effectively determine what should be considered to be a ``pure rotation'' and ``angular momentum'' at the particular retarded time represented by that cross-section. In this way, the supertranslation ambiguity would be eliminated and---since the definition does not require use of a fixed BMS symmetry---it can be chosen to correspond to the well defined notions of ``pure rotation'' and ``angular momentum'' in eras when the Bondi news vanishes. Thus, this strategy has the potential to yield a notion of angular momentum in general relativity with properties much more analogous to the notion of angular momentum in non-gravitational physics. A potential drawback of this strategy is that there may be circumstances where there is a distinguished BMS symmetry---such as when the spacetime admits an axial Killing field---and there is no guarantee that this approach will yield a quantity that can be interpreted as the charge associated with this symmetry.

A successful implementation of the first strategy was given by Dray and Streubel (DS) \cite{DS}. As shown in \cite{WZ}, the formulas for the DS charges conjugate to the BMS symmetries can be given a strong motivation from Hamiltonian considerations. Nevertheless, possibilities remain open for modifying the DS formulas for the BMS charges. Recently, Compere and Nichols (CN) \cite{CN} have proposed a one-parameter modification of the DS definition. This modification encompasses other definitions that had been given previously \cite{LL}-\cite{CFR}. Thus, there remain questions as to whether the DS definition is the only viable definition within the context of the first strategy.

A successful implementation of the second strategy was given by Chen-Wang-Yau (CWY) \cite{CWY}. As shown in  \cite{CWWY, CKWWY}, the limit of the Chen-Wang-Yau quasilocal angular momentum at null infinity (evaluated in \cite{KWY}) is free of supertranslation ambiguity. In addition, the CWY angular momentum transforms according to classical laws with respect to ``ordinary translations'' \cite[eq.(22)]{CWWY}.

The main purpose of this paper is to subject the above definitions to the criterion of ``cross-section continuity'': If an arbitrarily small amplitude ``wiggling'' of a cross-section in retarded time can produce finite changes in the angular momentum, then that definition is not viable, since it would be telling one at least as much about the choice of cross-section as any physical properties of the spacetime. In section \ref{angmomdef}, we review the definitions of the DS, CN, and CWY angular momentum and we then introduce the notion of cross-section continuity. The DS definition can be seen to satisfy this condition by virtue of the existence of a well-defined flux. However, in section \ref{cn}, we show that the CN definition does not satisfy this condition except for the choice of parameter where it coincides with the DS definition. Thus, the CN generalization is not viable. Finally, in section \ref{cwy}, we prove that the CWY definition does satisfy the cross-section continuity requirement.


\section{The DS, CN, and CWY Definitions of Angular Momentum}
\label{angmomdef}

We will work in Bondi coordinates $(u,r,x^A)$, where $x^A$ denotes angular coordinates on the $2$-sphere. We will follow the conventions of Flanagan and Nichols \cite{FN} and refer the reader to that reference for the definitions of the shear tensor, $C_{AB}(u,x^A)$, mass aspect, $m(u, x^A)$, and angular momentum aspect, $N_A(u, x^A)$. The Bondi news tensor is defined by $N_{AB} = \partial_u C_{AB}$. We denote the round metric on the sphere as $q_{AB}$. We denote the corresponding derivative operator on the sphere as $D_A$, and we write $D^2 = D^A D_A$.

Our analysis could be performed for arbitrary BMS symmetries. However, to keep both the discussion and the formulas as simple as possible, we will restrict consideration here to the BMS symmetry 
\be
X = Y^A \frac{\partial}{\partial x^A}
\label{Xrot}
\ee
where $Y^A$ is a fixed rotational Killing field on the spheres of constant $u$. This symmetry corresponds to a ``pure rotation'' in our Bondi coordinates $(u,r,x^A)$. We will refer to the charge corresponding to this symmetry as ``angular momentum.'' However, we will be interested in evaluating the angular momentum on an arbitrary cross-section given by
\be
u = f(x^A)
\ee
where $f: S^2 \to {\bb R}$ is smooth. Note that X does not act tangentially to such a cross-section. Indeed, in the new Bondi coordinates 
\begin{eqnarray}
 u' &=& u - f(x^A) \nonumber \\
x'^A &=& x^A 
\label{bct}
\end{eqnarray}
for which the cross-section is given by $u'=0$, we have
\be
X = Y^A \frac{\partial}{\partial x'^A} - Y^A D_A f \frac{\partial}{\partial u'} \, .
\label{X2}
\ee
Thus, in the new Bondi coordinates, $X$ corresponds to the rotation $Y^A$ plus a supertranslation by the amount $-Y^A D_A f$. This illustrates the supertranslation ambiguity in defining ``rotations'' that was described in the Introduction.

As discussed in the Introduction, the DS definition does not attempt to resolve the supertranslation ambiguity and simply defines an angular momentum quantity $J^{\rm DS}$ conjugate to the BMS symmetry $X$. On the cross-section $u=0$, the formula for the DS angular momentum  is\footnote{A formula for the DS charges in Bondi coordinates is given in eq.(3.5) of \cite{FN}. A restriction to vanishing news was made in the section of \cite{FN} containing that equation but, as shown in \cite{GPS}, the charge formula is valid without this restriction.}
\be
J^{\rm DS}\vert_{u=0} =  \frac{1}{8 \pi} \int Y^A \bigg (N_A -\frac{1}{4} C_{AB}D_D C^{DB} \bigg) \, 
\ee
where all quantities in the integrand are evaluated at $u=0$ and the integral is taken over the sphere with the natural volume element associated with $q_{AB}$. (In the following, unless otherwise noted, all integrals are taken over a sphere with this volume element.)
We can evaluate $J^{\rm DS}$ on the cross-section $u = f$ by transforming to the new Bondi coordinates eq.~(\ref{bct}) and taking into account the fact that $X$ has an additional supertranslation part in these coordinates (see eq.~(\ref{X2})). On the cross-section $u=f$, the formula for DS angular momentum is 
\be
J^{\rm DS}\vert_{u=f} =  \frac{1}{8 \pi}  \int Y^A \bigg (N'_A -\frac{1}{4}Y^A C'_{AB}D_D C'^{DB} \bigg)  - \frac{1}{4 \pi}  \int m' Y^A D_A f \, 
\label{jdsf}
\ee 
where the last term is the supertranslation contribution. The quantities $C'_{AB}$, $m'$, and $N'_A$ are the shear tensor, mass aspect, and angular momentum aspect in the new Bondi coordinates eq.~(\ref{bct}) evaluated at $u'=0$ (i.e., $u=f$). Explicitly, we have (see section C.5 of \cite{CJK})
\begin{eqnarray}
    C'_{AB} &=& C_{AB}\big\vert_{u=f} - 2 D_A D_B f + q_{AB} D^2 f  \label{Ctrans} \\
    m' &=& m|_{u=f} +\frac{1}{4}N^{AB}\big\vert_{u=f}D_A D_B f +\frac{1}{2}[D_B N^{AB}]\big\vert_{u=f} D_A f  + \frac{1}{4}[\partial_u N^{AB}]\big\vert_{u=f}D_A f D_B f.
    \label{mtrans}
\end{eqnarray}
There is a similar (although considerably more complicated) formula for $N'_A$ but we will not need this formula in our analysis below. Note that it follows immediately from eq.~(\ref{Ctrans}) that the news tensor transforms simply as
\be
N'_{AB} = N_{AB}\vert_{u=f}.
\ee

The CN definition of angular momentum follows the same strategy as DS of not attempting to resolve the supertranslation ambiguity but it makes a modification of the DS formula by the addition of a term with an arbitrary free parameter $\alpha$. For the BMS symmetry $X$ given by eq.(\ref{Xrot}), the CN angular momentum at $u=0$ is given by \cite{CN}
\be
J^{\rm CN}\vert_{u=0} = J^{\rm DS}\vert_{u=0} - \frac{\alpha - 1}{32 \pi} \int Y^A C_{AB} D_C C^{BC} \, .
\ee
This coincides with the DS angular momentum when $\alpha = 1$. As Compere and Nichols have noted \cite{CN}, the choice $\alpha = 3$ corresponds to the definitions used by Landau and Lifshitz \cite{LL}, Bonga and Poisson \cite{BP}, and Damour \cite{Dam}, whereas the choice $\alpha = 0$ corresponds to the definitions used by Pasterski, Strominger, and Zhiboedov \cite{PSZ} and Compere, Fiorucci, and Ruzziconi \cite{CFR}. On the cross-section $u=f$, the CN angular momentum is given by
\be
J^{\rm CN}\vert_{u=f} = J^{\rm DS}\vert_{u=f} - \frac{\alpha - 1}{32 \pi} \int Y^A C'_{AB} D_C C'^{BC} \, .
\label{jcnf}
\ee

As described in the Introduction, the CWY definition of angular momentum resolves the supertranslation ambiguity by, in effect, determining what corresponds to a ``pure rotation'' on any cross-section from the data on that cross-section. To define CWY angular momentum on any cross-section, one does not need to select a BMS symmetry $X$ but merely a Killing field $Y^A$ on the sphere. On the cross-section $u=0$, the CWY angular momentum associated with $Y^A$ is related to the DS angular momentum by \cite{CWWY, CKWWY}
\be
J^{\rm CWY}\vert_{u=0} = J^{\rm DS}\vert_{u=0} + \frac{1}{8 \pi}  \int m Y^A D_A c\label{CWY}
\ee
where $J^{\rm DS}$ is the DS angular momentum associated with $X$, eq.(\ref{Xrot}), and $c$ is the unique solution to 
\be
 \frac{1}{2} D^2 (D^2 + 2) c = D^A D^B C_{AB} 
\label{ceq}
\ee
such that $c$ has no $\ell = 0,1$ parts. Thus, the CWY angular momentum corresponds to the DS charge conjugate to the BMS symmetry
\be
\tilde{X} = Y^A \frac{\partial}{\partial x^A} + \frac{1}{2} Y^A D_A c \frac{\partial}{\partial u}
\ee
and, from this perspective, it can be viewed as selecting $\tilde{X}$ as being the BMS symmetry representing a ``pure rotation'' at the time $u=0$. Comparing with eq.~(\ref{X2}), we see that $\tilde{X}$ is tangent to the cross-section $u = c/2$. Now, in a sufficiently long non-radiative era---where the Bondi News vanishes and thus $C_{AB}$ is independent of $u$---it follows from eq.~(\ref{Ctrans}) that the electric parity part of the shear tensor vanishes on the cross-section $u=c/2$. Thus, in a non-radiative era, $\tilde{X}$ is tangent to an ``electric parity good cut'' of null infinity and thus represents a natural choice of ``pure rotation.'' In particular, the CWY definition corresponds to the natural choices of ``pure rotation'' at both asymptotically early and late retarded times---even though these early and late time choices correspond to different BMS symmetries. The CWY angular momentum on an arbitrary cross section interpolates between these choices in a manner that depends on the local conditions at that cross-section. However, if the spacetime has an exact axial Killing field, we have shown that $\tilde{X}$ need not correspond to this Killing field when the Bondi news is nonvanishing, i.e., the CWY angular momentum need not coincide with the DS (= Komar) angular momentum in the case where $X$ corresponds to an exact Killing field in the spacetime. In this case,

\be
J^{\rm DS}\vert_{u=f}=J^{\rm DS}\vert_{u=0} \text{ for any cross-section $u=f$}.
\ee
On the other hand, 
\be
J^{\rm CWY}\vert_{u=f}=J^{\rm CWY}\vert_{u=0} \text{ only if $f$ satisfies $Y^AD_A f=0$. }
\ee

On the cross-section $u=f$, a formula for the CWY angular momentum can be obtained by transforming to the new Bondi coordinates eq.~(\ref{bct}). We obtain
\be
J^{\rm CWY}\vert_{u=f} = J^{\rm DS}\vert_{u=f} + \frac{1}{8 \pi} \int m' Y^A D_A c'  +\frac{1}{4\pi} \int m' Y^A D_A f
\label{jcwyf}
\ee
where $c'$ is defined by eq.~(\ref{ceq}) with $C_{AB}$ replaced by $C'_{AB}$. 

We now introduce our cross-section continuity condition. Let $J : {\mathscr C} \to {\bb R}$ be any map from the space, $\mathscr C$, of smooth cross-sections of null infinity into $\bb R$. Then $J$ will be said to satisfy the {\em cross-section continuity condition} at $u=0$ if for any sequence $\{ f_n \}$ of smooth functions on the sphere such that $f_n \to 0$ uniformly as $n \to \infty$ we have $J\vert_{u=f_n} \to J\vert_{u=0}$. If this condition is not satisfied, then the value of $J$ would depend on the fine details of the choice of cross-section and could not plausibly provide useful physical information about the properties of the spacetime. Our goal for the remainder of this paper is to determine whether $J^{\rm DS}$, $J^{\rm CN}$, and $J^{\rm CWY}$ satisfy the cross-section continuity condition.

Since the formulas (\ref{jdsf}), (\ref{jcnf}), and (\ref{jcwyf}) for $J^{\rm DS}$, $J^{\rm CN}$, and $J^{\rm CWY}$ on the cross-section $u=f$ depend nonlinearly on derivatives of $f$, it is not immediately obvious by inspection of these formulas whether cross-section continuity holds for any of these quantities. However, it is known that the DS formula has an associated flux \cite{WZ}, \cite{GPS}. For the BMS symmetry $X$ given by eq.(\ref{Xrot}), the flux in our original Bondi coordinates is given by \cite{FN}
\be
{\mathcal F} = - \frac{1}{32 \pi} N^{AB} \pounds_Y C_{AB}.
\ee
For any two cross-sections ${\mathcal C}_1, {\mathcal C}_2 \in {\mathscr C}$, the difference between the DS angular momenta on these cross-sections is given by
\be
J^{\rm DS}\vert_{{\mathcal C}_2} - J^{\rm DS}\vert_{{\mathcal C}_1} = \int_{\mathcal R} {\mathcal F}
\ee
where $\mathcal R$ denotes the (compact) region of null infinity bounded by ${\mathcal C}_1$ and ${\mathcal C}_2$. It follows immediately from this flux formula---taking ${\mathcal C}_2$ to be the cross-section $u=f$ and ${\mathcal C}_1$ to be the cross-section $u=0$---that cross-section continuity is satisfied by the DS definition of angular momentum. In the next section, we shall show that cross-section continuity does not hold for $J^{\rm CN}$ except in the case $\alpha = 1$, where the CN definition coincides with the DS definition. In section \ref{cwy}, we will show that cross-section continuity does hold for the CWY definition of angular momentum.


\section{Failure of Cross-Section Continuity of the CN Modification of the DS Angular Momentum}
\label{cn}

Since $J^{\rm DS}$ satisfies the cross-section continuity condition, it is clear that $J^{\rm CN}$ satisfies this condition if and only if the quantity
\be
K \equiv J^{\rm CN} - J^{\rm DS}
\ee
satisfies this condition. We have
\be
K\vert_{u=f} - K\vert_{u=0} = - \frac{\alpha - 1}{32 \pi} \int Y^A \left[ C'_{AB} D_C C'^{BC} - C_{AB} D_C C^{BC} \right].
\label{Hdiff}
\ee
Let $(\theta, \phi)$ denote the usual spherical coordinates on the sphere. We consider the Killing field $Y^A = (\partial/\partial \phi)^A$ and investigate whether the right side converges to zero for the choice of sequence 
\begin{equation}    
    f_n(\theta, \phi) = \frac{1}{n} F(\theta) \sin(n \phi)
\label{fnseq}
\end{equation}
where $F$ is a smooth function of $\theta$ that vanishes in a neighborhood of $\theta = 0$ and $\theta =\pi$. Note that any expression containing at least $p$ factors of $f_n$ and a total of at most $q$ angular derivatives, $D_A$, is $O(1/n^{p-q})$ as $n \to \infty$, so any term with more factors of $f_n$ than angular derivatives can be neglected in this limit.
Taylor expanding $C_{AB}\big\vert_{u=f_n}$ in eq.~(\ref{Ctrans}), we obtain
\begin{equation}
    C'_{AB} = C_{AB}\big\vert_{u=0} + f_n N_{AB}\big\vert_{u=0} + \frac{1}{2} f_n^2 \partial_u N_{AB}\big\vert_{u=0} + O(f_n^3) - 2 D_A D_B f_n + q_{AB} D^2 f_n  
\label{Cexp}
\end{equation}
where the $O(f_n^3)$ term arises from the Taylor expansion and does not contain angular derivatives of $f_n$. This $O(f_n^3)$ term appears on the right side of eq.~(\ref{Hdiff}) only in terms that contain more factors of $f_n$ than angular derivatives, so 
for the sequence (\ref{fnseq}), the $O(f_n^3)$ term cannot contribute to the right side of eq.~(\ref{Hdiff}) as $n \to \infty$. The term $\frac{1}{2} f_n^2 \partial_u N_{AB}\vert_{u=0}$ potentially could contribute as $n \to \infty$ via terms that are cubic in $f_n$ with a total of three $\phi$-derivatives acting on $f_n$. An example of such a term is
\begin{eqnarray}
    \frac{1}{2}Y^A f_n^2 \partial_u N_{AB} (-&2& D_C D^B D^C f_n) \approx - f_n^2 \partial_u N_{\phi \phi} (q^{\phi \phi})^2  D_\phi D_\phi D_\phi f_n \nonumber \\
    &\approx& - f_n^2 \partial_u N_{\phi \phi} \frac{1}{\sin^4 \theta} \frac{\partial^3 f_n}{\partial \phi^3} \approx F^3(\theta) \partial_u N_{\phi \phi} \frac{1}{\sin^4 \theta} \sin^2 (n \phi) \cos(n \phi) \label{quadr}
\end{eqnarray}
where $\approx$ denotes equality modulo terms that vanish for the sequence (\ref{fnseq}) as $n \to \infty$. Although the right side of eq.(\ref{quadr}) remains bounded as $n \to \infty$, its integral over a sphere vanishes as $n \to \infty$. Similarly, all other terms involving $\frac{1}{2} f_n^2 \partial_u N_{AB}\vert_{u=0}$ do not contribute as $n \to \infty$. The remaining terms on the right side of eq.~(\ref{Hdiff}) are either linear in $f_n$ or quadratic in $f_n$. For the terms that are linear in $f_n$, all derivatives acting on $f_n$ can be removed by integration-by-parts, in which case it is clear that these terms cannot contribute as $n \to \infty$. The right side of eq.(\ref{Hdiff}) is thus reduced to the quadratic terms
\begin{multline}
-\frac{\alpha - 1}{32\pi}\int Y^A \bigg [ f_nN_{AB}D_C (f_n N^{BC}) + \frac{1}{2}C_{AB}D_C(f^2_n \partial_u N^{BC}) -2f_nN_{AB}D_C D^B D^C f_n\\ + f_nN_{AB} D^B D^2f_n  +\frac{1}{2}f_n^2 \partial_u N_{AB} D_C C^{BC} - 2D_A D_B f_n D_C (f_nN^{BC}) + D^2 f_n D^B(f_nN_{AB})\\ +4D_A D_B f_n D_C D^B D^C f_n - 2D_A D_B f_n D^B D^2 f_n -2D^2 f_n D_B D_A D^B f_n +D^2 f_n D_A D^2 f_n  \bigg]. \label{fnterms}
\end{multline}
The terms appearing in (\ref{fnterms}) can be broken into 3 types: (i) terms quadratic in the news; (ii) terms linear in the news; (iii) terms that are independent of the news. The terms quadratic in the news contain at most one derivative of $f_n$ and cannot contribute to the right side of eq.~(\ref{Hdiff}) as $n \to \infty$. The terms that are independent of the news can be shown to give vanishing contribution by a calculation similar to the calculation that shows that the CN angular momentum vanishes for an arbitrary cross-section in Minkowski spacetime (see section III of \cite{EN}). Thus, we need only consider the contribution to the right side of eq.~(\ref{Hdiff}) arising from terms that are linear in the news and quadratic in $f_n$. After some cancelations arising from integration by parts, we obtain
\begin{equation}
K\vert_{u=f_n} - K\vert_{u=0} \approx  \frac{\alpha - 1}{16 \pi} \int Y^A \left[f_nN_{AB}D_CD^BD^C f_n  + D_AD_B f_n D_C (f_n N^{BC}) \right] \, .
\label{Q3}
\end{equation}
In the first term of (\ref{Q3}), we may  write $D_C D^B \approx D^B D_C$ since the difference yields a curvature term that will not contribute as $n \to \infty$. Integrating this term by parts with respect to $D^B$ after this interchange, we obtain
\begin{equation}
K\vert_{u=f_n} - K\vert_{u=0} \approx  \frac{\alpha - 1}{16 \pi} \int Y^A \left[- q_{AB} D^2 f_n  + D_AD_B f_n \right] D_C (f_n N^{BC}). 
\label{Q6}
\end{equation}
Evaluating the right side for our choices of $Y^A$ and of $f_n$ and discarding terms that do not contribute as $n \to \infty$, we obtain
\begin{equation}
K\vert_{u=f_n} - K\vert_{u=0} \approx  \frac{\alpha - 1}{16 \pi} \int N^{\theta \phi} \cos^2(n \phi) \left[F \partial_\theta F - F^2 \cot \theta \right] \, .
\label{Q4}
\end{equation}
Thus, we find
\begin{equation}
\lim_{n \to \infty} K\vert_{u=f_n} - K\vert_{u=0} = - \frac{\alpha - 1}{64 \pi} \int F^2(\theta)  \left( \frac{\partial N^{\theta \phi}}{\partial \theta} + 3 \cot \theta N^{\theta \phi} \right) \, .
\label{Q5}
\end{equation}
Since $F(\theta)$ is an arbitrary smooth function that vanishes near the poles, it is clear that the integral on right side of eq.~(\ref{Q5}) is nonvanishing in general, so the CN definition of angular momentum does not satisfy the cross-section continuity condition except in the case $\alpha = 1$, when it coincides with the DS definition.

\section{Satisfaction of Cross-Section Continuity of the CWY Angular Momentum}
\label{cwy}
\subsection{Relating the fluxes of  $J^{\rm CWY}$ and $J^{\rm DS}$}

We prove the cross-section continuity of the CWY angular momentum in the following sense: The CWY angular momentum of a smooth cross section $u=f$ approaches the CWY angular momentum of the cross section $u=0$, provided $f$ approaches $0$ in $C^0$. Here $f$ is any smooth function on $S^2$. 
Such a continuity statement precipitates the extension of the definition of CWY angular momentum to all $C^0$ cross sections. 
The proof will rely on the cross-section continuity of the DS angular momentum.

 The strategy of the continuity proof consists of the following two steps.

In step 1),  assuming $f(x^A)>0$, we relate the fluxes of  CWY angular momentum and DS angular momentum in the region between $u=0$ and $u=f$. In particular, we identify $\delta (f)$ in 
\begin{equation}\label{delta_f} J^{\rm CWY}\vert_{u=f}-J^{\rm CWY}\vert_{u=0}={J}^{\rm DS}\vert_{u=f}-{J}^{\rm DS}\vert_{u=0}+\delta (f)\end{equation} as an integral on $S^2$. 

In step 2), we apply elliptic estimates to $\delta(f)$. In particular, we show that 
\[\delta(f)\leq C||f||_{C^0}\] for a constant $C$. 
By the continuity of the DS angular momentum and the above two steps,  the proof of the cross-section continuity of the CWY angular momentum will be completed.

First, combining eq.\eqref{CWY} and eq.\eqref{jcwyf} yields
\begin{equation}\begin{split}&J^{\rm CWY}\vert_{u=f}-J^{\rm CWY}\vert_{u=0}\\
&={J}^{\rm DS}\vert_{u=f}-{J}^{\rm DS} \vert_{u=0}+\frac{1}{4\pi}\int m' Y^A D_A f+\frac{1}{8\pi}\int m' Y^A D_A c'-\frac{1}{8\pi}\int m Y^A D_A c.\end{split}\end{equation}
Therefore, we identify $\delta(f)$ as 
\begin{equation}\label{delta_f2}\delta (f)=\frac{1}{8\pi}\int m' Y^A D_A (c'+2f)-\frac{1}{8\pi}\int m Y^A D_A c.\end{equation}

We proceed to relate $c'$ and $c$. Introducing the  unique function $s$ on $S^2$ of $\ell\geq 2$ that satisfies \begin{equation}\label{s_eq}\frac{1}{2}D^2 (D^2+2)s=D^AD^B(C_{AB}\vert_{u=f}-C_{AB}\vert_{u=0}),\end{equation} eq.\eqref{Ctrans} implies $c'=s+c-2f$. Therefore,  $\delta (f)$ can be written as
\begin{equation}\label{delta_f}\delta (f)= \frac{1}{8\pi} \int  m' Y^A D_A (s + c) -\frac{1}{8\pi} \int  mY^A D_A c.\end{equation} 

Finally, we relate $m'$ to $s$. Recalling the modified mass aspect  \begin{equation}\label{mma} \widehat{m} (u, x^A)=(m-\frac{1}{4}D^AD^B C_{AB})(u, x^A)\end{equation}as in \cite{CKWWY}, eq.\eqref{mtrans} is then equivalent to 
\begin{equation}\label{relation}m'=\widehat{m}\vert_{u=f}+\frac{1}{8} D^2 (D^2+2) (s+c).\end{equation}

\subsection{Estimating $\delta (f)$}

In this subsection, we show that $\delta (f)$ is bounded by $\| f\|_{C^0}$. We first rewrite eq.\eqref{delta_f} as
\begin{align*}
8\pi\delta (f)
&= \int m' Y^A D_A s + \int (m' -m) Y^A D_A c \\ &= \delta_1 + \delta_2
\end{align*}
The following identity for a smooth function $g$ on $S^2$ is proved in \cite[Lemma 2.1]{CKWWY}:
\begin{equation}\label{integral}\int (D^2 (D^2+2) g) Y^A D_A g=0.\end{equation}

By eq.\eqref{integral}, eq.\eqref{relation}, and integration by parts,
\[\delta_1 =\int (\widehat{m}\vert_{u=f}+\frac{1}{8} D^2 (D^2+2)c ) Y^AD_A s\]
and 
\[\begin{split}\delta_2&=\int (\widehat{m}\vert_{u=f}-m+\frac{1}{8}D^2 (D^2+2)c ) Y^A D_A c+\frac{1}{4}\int (C_{AB}\vert_{u=f}-C_{AB}\vert_{u=0})D^AD^B (Y^E D_E c)\\
&=\int \left[ \int_0^f \partial_u \widehat{m} \, du \right] Y^AD_Ac+\frac{1}{4}\int \left[\int_{0}^f \partial_u C_{AB}\, du \right] D^AD^B (Y^E D_E c),  \end{split}\] where we used $m-\frac{1}{8}D^2 (D^2+2)c=\widehat{m}\vert_{u=0}$. Recalling $\partial_u \widehat{m}=-\frac{1}{8}N_{AB} N^{AB}$ from \cite[(3.3)]{CKWWY}, it is clear that $|\delta_2|$ is bounded by a multiple of $\| f\|_{C^0}$.

On the other hand, we claim that
\[ | \delta_1 | \le C_1 \| f\|_{C^0} \]  by the following Lemma.  

\begin{lemma}\label{estimate}
If $s$ satisfies $\frac{1}{2}D^2(D^2 + 2)s = D^AD^B M_{AB}$, then 
\begin{align*}
\| D s\|_{L^2} \le  \sqrt{\frac{16\pi}{3}} \max_{S^2} |M_{AB}|.
\end{align*}
\end{lemma}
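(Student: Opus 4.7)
The plan is to pair the PDE with $s$ itself, integrate over $S^2$, and close the resulting energy estimate using a spectral gap coming from the fact that $s$ must lie in the subspace of functions with vanishing $\ell = 0, 1$ modes (so that the operator $\tfrac{1}{2} D^2(D^2+2)$ is invertible, exactly as in the defining equation~(\ref{ceq}) for $c$).

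First, I would multiply the equation by $s$ and integrate over $S^2$. Since $S^2$ is closed, no boundary terms arise. Using self-adjointness of $D^2$ and the identity $\int g\,D^2 g = -\|Dg\|_{L^2}^2$, the left-hand side becomes $\tfrac{1}{2}\|D^2 s\|_{L^2}^2 - \|Ds\|_{L^2}^2$. On the right, two integrations by parts move both derivatives onto $s$, yielding $\int_{S^2} M^{AB} D_A D_B s\, dA$; pointwise Cauchy--Schwarz for tensors followed by a global Cauchy--Schwarz against $\mathbf{1}$ bounds this by $\sqrt{4\pi}\,\max_{S^2}|M_{AB}|\cdot\|\mathrm{Hess}(s)\|_{L^2}$, using $\mathrm{Vol}(S^2) = 4\pi$.

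Next, I would relate the three quantities $\|Ds\|_{L^2}^2$, $\|\mathrm{Hess}(s)\|_{L^2}^2$, and $\|D^2 s\|_{L^2}^2$. On $S^2$ the Bochner identity for a scalar (using $\mathrm{Ric} = q$) gives $\|\mathrm{Hess}(s)\|_{L^2}^2 = \|D^2 s\|_{L^2}^2 - \|Ds\|_{L^2}^2$. A spherical-harmonic expansion together with the spectral gap $\lambda_\ell = \ell(\ell+1) \ge 6$ for $\ell \ge 2$ then furnishes the comparisons $\|Ds\|_{L^2}^2 \le \tfrac{1}{6}\|D^2 s\|_{L^2}^2$ and $\tfrac{1}{2}\|D^2 s\|_{L^2}^2 - \|Ds\|_{L^2}^2 \ge \tfrac{1}{3}\|D^2 s\|_{L^2}^2$ (the worst case being $\ell = 2$, where $\lambda - 2 = \tfrac{2}{3}\lambda$). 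Substituting these into the Cauchy--Schwarz bound from the previous step gives a one-variable quadratic inequality in $\|D^2 s\|_{L^2}$ that closes to produce $\|D^2 s\|_{L^2} \le C\max_{S^2}|M_{AB}|$, whence $\|Ds\|_{L^2} \le C'\max_{S^2}|M_{AB}|$.

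The main obstacle is landing on the exact numerical constant $\sqrt{16\pi/3}$ rather than just a constant of the same order. Any straightforward implementation of the scheme above produces a bound of this form, but the precise coefficient is sensitive to bookkeeping choices: whether one pairs $M^{AB}$ against the full Hessian or only its trace-free part under Cauchy--Schwarz, how sharply one exploits the $\ell = 2$ mode (where both spectral-gap inequalities are saturated), and how one combines these with the $\mathrm{Vol}(S^2) = 4\pi$ factor. Pinning down these constants together is the non-routine step that produces exactly $16\pi/3$.
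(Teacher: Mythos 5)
Your proposal follows essentially the same route as the paper: multiply the equation by $s$, integrate over $S^2$, use the Bochner identity $\int (D^2 s)^2 = \int |D_A D_B s|^2 + |Ds|^2$, and invoke the spectral gap $\ell(\ell+1)\ge 6$ on the $\ell\ge 2$ subspace. The only divergence is the step you flagged as uncertain, and it is worth spelling out how the paper resolves it: instead of Cauchy--Schwarz on the cross term, the paper applies the pointwise weighted Young inequality $2M_{AB}D^AD^Bs \le 2|M_{AB}|^2 + \tfrac12 |D_AD_Bs|^2$, with the weight $\tfrac12$ chosen so that, after substituting the Bochner identity, the Hessian term is absorbed \emph{linearly} into the left-hand side, leaving $\int (D^2s)^2 - 3|Ds|^2 \le 16\pi \max|M_{AB}|^2$; the spectral gap then gives $\int(D^2s)^2 \ge 6\int|Ds|^2$ and hence $3\int|Ds|^2 \le 16\pi\max|M_{AB}|^2$ directly, with no quadratic inequality to close. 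Your Cauchy--Schwarz variant does close (from $\tfrac13\|D^2s\|^2 \le \sqrt{4\pi}\,\max|M_{AB}|\,\|D^2s\|$ one gets $\|Ds\|_{L^2}\le\sqrt{6\pi}\,\max|M_{AB}|$), so it proves the lemma with a marginally larger constant, which is all that is needed for the application to bounding $\delta_1$; but as a proof of the statement with the literal constant $\sqrt{16\pi/3}$ it falls slightly short, and the trace-free refinement you mention is not the missing ingredient---the weighted absorption is.
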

\begin{proof}
Multiplying the equation by $s$ and integrating over $S^2$,
we get
\[ \int (D^2s)^2 - 2 |D s|^2 = \int 2 M_{AB} D^AD^B s \le \int 2 |M_{AB}|^2 + \int \frac{1}{2}|D_AD_B s|^2. \]
Recalling the identity on $S^2$, \begin{align}\label{identity}
\int (D^2 s)^2 = \int |D_AD_B s|^2 + |D s|^2,
\end{align} we get
\[ \int (D^2 s)^2 -3 |D s|^2 \le 16\pi (\max_{S^2} |M_{AB}|^2). \]

Let $s = \sum_{\ell=2}^\infty s_\ell$ be the spherical harmonic decomposition of $s$. We have $\int (D^2 s)^2 = \sum_{\ell= 2}^\infty \ell^2(\ell+1)^2 s_\ell^2$, $\int |D s|^2 = \int -sD^2 s = \sum_{\ell=2}^\infty \ell (\ell + 1) s_\ell^2$ and hence $\int (D^2 s)^2 \ge 6 \int |D s|^2$. As a result, we have  $\int |D s|^2 \le \frac{16\pi}{3} \max_{S^2} |M|^2$.
\end{proof}

\section*{Acknowledgements}

This material is based upon work supported by the National Science
Foundation under Grant Number DMS-2104212 (Mu-Tao\ Wang) and Grant PHY-2105878 (D.E. Paraizo and R.M. Wald), by the Simons Foundation under Grant  Number 584785 (Po-Ning\ Chen), and by Taiwan NSTC grant
109-2628-M-006-001-MY3 (Y.-K.\ Wang).

\end{document}